\algnewcommand\And{\textbf{and}}
\algnewcommand\Or{\textbf{or}}
\algnewcommand\Not{\textbf{not}}
\algnewcommand\In{\textbf{in}}
\algnewcommand\Each{\textbf{each}}
\newcommand{\true}{\textsc{true}}
\newcommand{\false}{\textsc{false}}
\newcommand{\squishlist}{
 \begin{list}{$\bullet$}
  { \setlength{\itemsep}{0pt}
     \setlength{\parsep}{3pt}
     \setlength{\topsep}{3pt}
     \setlength{\partopsep}{0pt}
     \setlength{\leftmargin}{2.5em}
     \setlength{\labelwidth}{1em}
     \setlength{\labelsep}{0.5em} } }
\newcommand{\squishlisttwo}{
 \begin{list}{$\triangleright$}
  { \setlength{\itemsep}{0pt}
     \setlength{\parsep}{0pt}
    \setlength{\topsep}{0pt}
    \setlength{\partopsep}{0pt}
    \setlength{\leftmargin}{2em}
    \setlength{\labelwidth}{1.5em}
    \setlength{\labelsep}{0.5em} } }
\newcommand{\squishend}{
  \end{list}  }
\definecolor{verbgray}{gray}{0.9}
\definecolor{shadecolor}{rgb}{.91, .91, .91}
\definecolor{bordercolor}{rgb}{.8, .8, .6}
\definecolor{ultramarine}{rgb}{0, 0.125, 0.376}
 \definecolor{arsenic}{rgb}{0.23, 0.27, 0.29}
 \definecolor{beige}{rgb}{0.96, 0.96, 0.86}
\definecolor{amber}{rgb}{1.0, 0.75, 0.0}
\definecolor{orange}{rgb}{1.0, 0.49, 0.0}
\definecolor{dandelion}{rgb}{0.94, 0.88, 0.19}
  \definecolor{indiagreen}{rgb}{0.07, 0.53, 0.03}
  \definecolor{huntergreen}{rgb}{0.21, 0.37, 0.23}
\newcommand{\blue}[1] {\textcolor{blue}{#1}}
\newcommand{\red}[1] {\textcolor{red}{#1}}
\newcommand{\bblue}[1] {{\bf \textcolor{blue}{#1}}}
\newcommand{\defo}[1] {\emph{\textcolor{blue}{#1}}}
\definecolor{shadecolor}{rgb}{.9, .9, .9}
    \newenvironment{frshaded*}{%
    \MakeFramed {\advance\hsize-\width \FrameRestore}}%
    {\endMakeFramed}
    \newcounter{examplecounter}
\newenvironment{exam}{
 \begin{frshaded*}
    \refstepcounter{examplecounter}%
    \noindent
  \textbf{Example \arabic{examplecounter}}%
  \quad
}{%
\end{frshaded*}
}
\newenvironment{frshaded2*}{%
    \MakeFramed {\advance\hsize-\width \FrameRestore}}%
    {\endMakeFramed}
\newenvironment{result}{
 \begin{frshaded2*}
}{%
\end{frshaded2*}

}
\newenvironment{frshaded3*}{%
    \MakeFramed {\advance\hsize-\width \FrameRestore}}%
    {\endMakeFramed}
\definecolor{winered}{rgb}{0.5,0.2,0}
\newcommand{\edit}[1]{\textcolor{black}{#1}}
\DeclareMathOperator{\per}{per}
\title{Cut-Down de Bruijn Sequences}
\titlerunning{Cut-Down de Bruijn Sequences}
\author{Ben Cameron}{The King's University, Canada}{}{}{}
\author{Aysu G\"undo\u{g}an}{ University of Guelph, Canada}{}{}{}
\author{Joe Sawada}{University of Guelph, Canada}{}{}{}
\author{~}{~}{~}{}{}
\authorrunning{B. Cameron, A. G\"undo\u{g}an, and J. Sawada} 
\authorrunning{~} 
\keywords{de Bruijn sequence, cut-down,  pure run-length register, de Bruijn graph}
\begin{document}

\maketitle


\begin{abstract}
%
A cut-down de Bruijn sequence is a cyclic string of length $L$, where $1 \leq L \leq k^n$,  such that every substring of length $n$ appears \emph{at most} once.   
Etzion~[\emph{Theor. Comp. Sci} 44 (1986)] introduced an algorithm to construct binary cut-down de Bruijn sequences requiring $o(n)$ simple $n$-bit operations per symbol generated.  In this paper, we 
simplify the algorithm and improve the running time to $\mathcal{O}(n)$ time per symbol generated using $\mathcal{O}(n)$ space. Additionally, we develop the first successor-rule approach for constructing a binary cut-down de Bruijn sequence by leveraging recent ranking algorithms for fixed-density Lyndon words. Finally, we develop an algorithm to generate cut-down de Bruijn sequences for $k>2$ that runs in $\mathcal{O}(n)$ time per symbol using $\mathcal{O}(n)$ space after some initialization. 
\end{abstract}



\section{Introduction}
A \defo{de Bruijn sequence} (DB sequence) of span $n$, over an alphabet of size $k$, is a cyclic sequence of length $k^n$ such that every $k$-ary string of length $n$ appears as a substring exactly once.  For example, the following is a DB sequence for $n=6$ and $k=2$:
\begin{equation} \label{eq:DB6} [~0000001111110111100111000110110100110000101110101100101010001001~].
\end{equation}
\edit{
The \defo{de Bruijn graph}  of span $n$, over an alphabet of size $k$, is the directed graph $G(n,k) = (V,E)$  where $V$ is the set of all $k$-ary strings of length $n$ and there is a directed edge $e=(u,v) \in E$  from $u = u_1u_2\cdots u_n$ to $v=v_1v_2\cdots v_n$ if $u_2\cdots u_n = v_1\cdots v_{n-1}$. Each edge $e$ is labeled by $v_n$.  In this paper, the term \defo{cycle} corresponds to a sequence of edge labels obtained by traversing some cycle/circuit in the de Bruijn graph (or a related edge-labeled graph), and the notation $[\alpha]$ denotes the sequence $\alpha$ is cyclic.  For example,  Figure~\ref{fig:dbgraph} illustrates $G(3,2)$ 
and the cycles $[01]$ and $[1101100001]$.
 It is well known that a DB sequence of span $n$ is in one-to-one correspondence with an Euler cycle in $G(n{-}1,k)$. } 

\begin{figure}[ht]
    \centering
    \resizebox{4.5in}{!}{\includegraphics{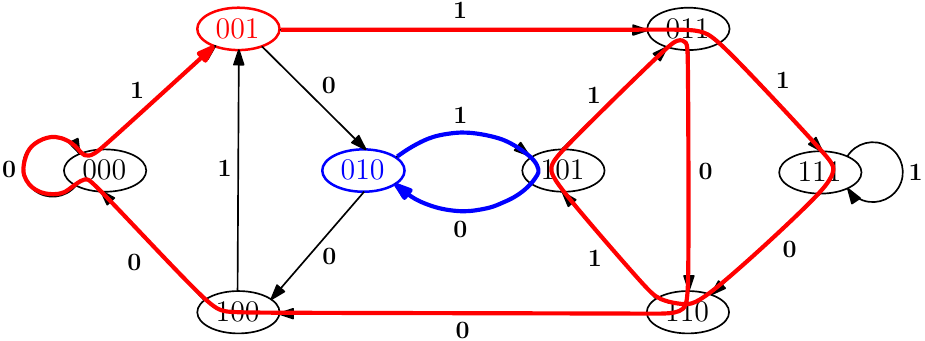}}
    \caption{The de Bruijn graph $G(3,2)$ highlighting cut-down DB sequences $[10]$ (blue) and $[1101100001]$  (red) of length two and ten, respectively. }
    \label{fig:dbgraph}
\end{figure}

For some applications it may be more convenient to produce a cycle of arbitrary length such that there are no repeated length-$n$ substrings, i.e., a cycle of arbitrary length in $G(n-1,k)$.  For instance, it may be more natural to consider the de Bruijn card trick~\cite{magic} using 52 cards rather than 32.  Also, for applications in robotic vision and location detection~\cite{magic,robot,sinden}, instead of forcing a location map to have length $k^n$, an arbitrary length allows for more flexibility.  
This gives rise to the notion of a \defo{cut-down de Bruijn sequence}\footnote{The term \emph{cutting-down}  is perhaps first used in~\cite{JSH09} to describe such sequences.} (cut-down DB sequence),  which is a cyclic sequence of length $L$ over an alphabet of size $k$, where $1 \leq L \leq k^n$,  such that every substring of length $n$ appears \emph{at most} once.    As an example, the following is a binary cut-down DB sequence of length 52:
$$ [~0000001111001110001101101001100001011101011001010001~].$$
Note that every substring of length six, including in the wraparound, appears at most once.
Any cut-down DB sequence of length $L$ with respect to $k$ and $n$ is also a cut-down DB sequence with respect to $k$ and $n+1$~\cite{etzion86}.   Thus, throughout this paper we assume  $k^{n-1} < L \leq k^{n}$. 

 
Cut-down DB sequences are known to exist for all lengths $L$ and any alphabet of size $k$~\cite{lempel71} (for $k{=}2$ see~\cite{Yoeli}).   In bioinformatics, the alphabet $\{C,G,A,T\}$ of size $k=4$ is of particular interest and there are a number of applications that apply DB sequences and their relatives~\cite{neural,pevzner}.   
A simple algorithm to construct binary cut-down DB sequences based on linear feedback shift registers and primitive polynomials is given by Golomb~\cite[P.~193]{golomb2017}; it runs in $\mathcal{O}(n)$-amortized time per symbol using $\mathcal{O}(n)$ space. However, the construction has an exponential-time delay before producing the first symbol, requires a specific primitive polynomial for each order $n$, and there is no way to determine if a given length-$n$ string appears as a substring without generating the entire cycle.  This approach is generalized to construct cut-down DB sequences where $k$ is a prime power, and subsequently extended to handle arbitrary sized alphabets by applying additional number theoretic results~\cite{landsberg}.  Although no formal algorithmic analysis is provided, the formulation appears to share properties no better than the related binary construction. An algebraic approach for when $k$ is a prime power is also known~\cite{hemmati}.
A cycle-joining based approach
 to construct cut-down DB sequences was developed by Etzion~\cite{etzion86} for $k=2$; it requires $o(n)$ simple $n$-bit operations to generate each symbol.  The approach follows two main steps:
\begin{itemize}
    \item First, an initial cycle is constructed with length $L+s$, where $0 \leq s < n$, using the well-known cycle-joining approach. 
    \item Second, depending on $s$, up to $\lceil \log n \rceil$ small cycles are detected and removed to obtain a cycle of length $L$.
\end{itemize}
The resulting algorithm can construct an exponential number of cut-down DB sequences for any given $L$; however, their algorithm is not optimized to generate a single cut-down DB sequence.  
Etzion's construction also has a downside for some applications: It starts with a specific length-$n$ string and the historical context matters to produce successive symbols. This means the resulting cycle does not have a corresponding successor rule, and  furthermore, testing whether or not a specific string belongs to the cycle may involve generating the entire cycle.  

The main results of this paper are as follows: 
%
\vspace{-0.1in}

\begin{enumerate}
    \item We simplify Etzion's approach and develop an algorithm to construct a binary cut-down DB sequence in $\mathcal{O}(n)$ time per symbol using $\mathcal{O}(n)$ space. \smallskip

    \item We develop the first successor-rule approach to construct a binary cut-down DB sequence using $\mathcal{O}(n^{1.5})$-amortized simple operations on $n$-bit numbers per symbol generated, and polynomial space.  The algorithm can start with any string on the cycle and the context does not matter when producing successive symbols. \edit{Determining whether or not a length-$n$ string appears as a substring on the sequence can be determined in $\mathcal{O}(n^3)$ time.}  \smallskip
    
    \item We develop an algorithm to generate cut-down DB sequences for $k>2$ that runs in $\mathcal{O}(n)$ time per symbol using $\mathcal{O}(n)$ space after some initialization requiring polynomial time and space.  A number of non-trivial adaptations to the binary algorithm are required to generalize to larger alphabets. 
\end{enumerate}  \vspace{-0.1in}

\smallskip

\noindent
All three algorithms require polynomial space and generate each symbol with polynomial-time delay.

\noindent


\medskip

\noindent
{\bf Related work.}
A \defo{generalized de Bruijn sequence}, as defined in~\cite{blum-conf}, is a cut-down DB sequence of length $k^{n-1} < L \leq k^n$ with an additional property: \emph{every $k$-ary string of length $n{-}1$ appears as a substring}.  Their existence is known for all $L$ and $k$~\cite{shallit}. For special values of $L$, these sequences can be generated by considering the base $k$  expansion of $1/L$~\cite{blum-conf,blum,matthews}.  An  algorithm based on Lempel's $D$-morphism~\cite{lempel} has recently been proposed to construct these sequences~\cite{abhi} that have an even stronger property:
    \emph{every $k$-ary string of length $j\leq L$ appears either $\lfloor L/k^j \rfloor$ or  $\lceil L/k^j \rceil$ times as a substring}. 
    We call sequences with this latter property
    \defo{balanced cut-down de Bruijn sequences}.
    The proposed algorithm can generate the sequences in $\mathcal{O}(1)$-amortized time per symbol, but it requires  exponential space and there is an exponential time delay before outputting the first symbol.


\edit{
\defo{Repeat-free sequences} have all of the properties of cut-down DB sequences except the cyclic property; they are  prefixes of a DB sequence and correspond to paths in the de Bruijn graph. They were considered from an algorithmic perspective in~\cite{Yaakobi2023,Yaakobi2021} and discussed from a combinatorial perspective under the name \defo{partial de Bruijn $\ell$-sequences} in~\cite{Zulingetal2017}. 
}

\smallskip

\noindent
{\bf Outline of paper.}
In Section~\ref{sec:background}, we provide some background on the cycle-joining method and 
a simple successor rule to construct  DB sequences. 
In Section~\ref{sec:etzion}, we review Etzion's approach for constructing binary cut-down DB sequences. In Section~\ref{sec:binary}, we present in detail our simplified algorithm to construct binary cut-down DB sequences; in Section~\ref{sec:successor}, we present the first successor-rule algorithm for constructing binary cut-down DB sequences.  In Section~\ref{sec:kary} 
we extend our binary algorithm to work for $k >2$.
Implementation of our algorithms, written in C, are available for download at~\url{http://debruijnsequence.org/db/cutdown}~\cite{dborg}; this resource also provides a comprehensive background on DB sequences and their constructions.

\section{Background} \label{sec:background}

Let $\Sigma$ denote the alphabet $\{0,1, \ldots, k{-}1\}$ where $k \geq 2$.  Let $\Sigma^n$ denote the set of all length-$n$ strings over $\Sigma$. Let $\alpha = a_1a_2\cdots a_n$ be a string in $\Sigma^n$. 
Let $\alpha^t$ denote $t$ copies of $\alpha$ concatenated together.  The \defo{period} of $\alpha$, denoted $\per(\alpha)$, is the smallest integer $p$ such that $\alpha = (a_1\cdots a_p)^t$ for some $t>0$.  If $\alpha$ has period less than $n$ it is said to be \defo{periodic}; otherwise it is \defo{aperiodic}. The lexicographically smallest element in an equivalence class of words under rotation is called a \defo{necklace}.

A \defo{Lyndon word} is an aperiodic necklace.   The \defo{weight} of a string is the sum of its elements (when $k=2$, weight is sometimes referred to as \defo{density}). Let $T_k(n,w)$ denote the number of $k$-ary strings of length $n$ and weight $w$. Note $T_2(n,w) = {n \choose w}$. \edit{Let $L_k(n,w)$ denote the number of $k$-ary Lyndon words of length $n$ and weight $w$.  By partitioning the strings of $T_k(n,w)$ into equivalence classes under rotation and considering the period of the string in each class (see Example~\ref{exam:PCR}) observe that $T_k(n,w) = \sum_{d\vert n} \frac{n}{d}L(\frac{n}{d},\frac{w}{d})$.  By applying Mobi\"{u}s inversion we have:
%
$$L_k(n,w) = \frac{1}{n}\sum\limits_{d\vert \gcd(n,w)} \mu(d)\ T_k\left(\frac{n}{d}, \frac{w}{d}\right),$$
where $\mu$  is the Mobi\"{u}s function.  When $k=2$, the formula is derived in~\cite{GilbertRiordan1961}
and applied in~\cite{hartman}.} 


A \defo{feedback function} is a function $f: \Sigma^n \rightarrow \Sigma$.  A \defo{feedback shift register} (FSR) is a function $F: \Sigma^n \rightarrow \Sigma^n$ defined as $F(\alpha) = a_2a_3\cdots a_nf(\alpha)$, given a feedback function $f$.  An FSR is said to be \defo{nonsingular} if it is one-to-one.  
The \defo{pure cycling register} (PCR) is the FSR with feedback function $f(\alpha) = a_1$.  It partitions $\Sigma^n$ into an equivalence class of strings under rotation.  Thus, the cycles induced by the PCR, called \defo{PCR cycles}, are in one-to-one correspondence with the necklaces of order $n$ and also with Lyndon words whose lengths divide $n$. They also appear as cycles in $G(n-1,k)$.  Recall, we use the notation $[\alpha]$ to denote a cycle.  

\begin{exam} \label{exam:PCR} \small
 Let $\Sigma = \{0,1\}$ and let $n=6$.  The following are the 14 equivalence classes of $\Sigma^6$ under rotation, where the first string in each class is a necklace.  
 \begin{center}
 \begin{tabular}{ccccccc}

\ \bblue{0}00000  \  & \  \bblue{0}00001  \ &  \ \bblue{0}00011   \ & \ \bblue{0}00101   \ & \  \bblue{0}00111 \   &  \ \bblue{0}01001 \  & \ \bblue{0}01011 \   \\
       & \bblue{0}00010 & \bblue{0}00110 & \bblue{0}01010 & \bblue{0}01110 & \bblue{0}10010 & \bblue{0}10110  \\
       & \bblue{0}00100 & \bblue{0}01100 & \bblue{0}10100 & \bblue{0}11100 & \bblue{1}00100 & \bblue{1}01100  \\
       & \bblue{0}01000 & \bblue{0}11000 & \bblue{1}01000 & \bblue{1}11000 &       & \bblue{0}11001  \\
       & \bblue{0}10000 & \bblue{1}10000 & \bblue{0}10001 & \bblue{1}10001 &       & \bblue{1}10010  \\
       & \bblue{1}00000 & \bblue{1}00001 & \bblue{1}00010 & \bblue{1}00011 &       & \bblue{1}00101  \\
       & \\ 
       & \\
\bblue{0}01101 & \bblue{0}01111 & \bblue{0}10101 & \bblue{0}10111 & \bblue{0}11011 & \bblue{0}11111 & \bblue{1}11111 \\  
\bblue{0}11010 & \bblue{0}11110 & \bblue{1}01010 & \bblue{1}01110 & \bblue{1}10110 & \bblue{1}11110 &  \\
\bblue{1}10100 & \bblue{1}11100 &        & \bblue{0}11101 & \bblue{1}01101 & \bblue{1}11101 &  \\
\bblue{1}01001 & \bblue{1}11001 &        & \bblue{1}11010 &        & \bblue{1}11011 &  \\
\bblue{0}10011 & \bblue{1}10011 &        & \bblue{1}10101 &        & \bblue{1}10111 &  \\
\bblue{1}00110 & \bblue{1}00111 &        & \bblue{1}01011 &        & \bblue{1}01111 & 
 \end{tabular}
  \end{center}
 
 \noindent
The following 14 PCR cycles are in one-to-one correspondence with the set of Lyndon words of lengths 1, 2, 3, and 6 (lengths that divide $n=6$):
\begin{center}
\begin{tabular}{ccccccc}
    $[0]$ &  $[000001]$ & $[000011]$ & $[000101]$ & $[000111]$ & $[001]$ & $[001011]$ \\
    $[001101]$ &  $[001111]$ & $[01]$ & $[010111]$ & $[011]$ &  $[011111]$ & $[1]$.
\end{tabular}
\end{center}

\vspace{-0.05in}
   
\end{exam}

\noindent
We say a cycle \defo{contains} a string $\alpha$ if $\alpha$ has length $n$ and is found as a substring on the cycle; we say $\alpha$ \defo{belongs} to the cycle.  Note the strings belonging to a PCR cycle all have the same weight.  Thus,
let the weight of a PCR cycle be the weight of its corresponding length $n$ necklace.  For example, when $n=6$ the weight of $[000001]$ is one and the weight of $[01]$ is three since its corresponding necklace is 010101. 

A \defo{universal cycle} for a set $\mathbf{S}$ of length-$n$ strings is a cyclic sequence of length $|\mathbf{S}|$ such that every string in $\mathbf{S}$ appears as a substring exactly once. 
\edit{Two universal cycles for $\mathbf{S_1}$ and $\mathbf{S_2}$  are said to be \defo{disjoint} if the sets $\mathbf{S_1}$ and $\mathbf{S_2}$ are disjoint.}
Of course, a DB sequence is a special case of a universal cycle when $\mathbf{S}$ corresponds to all $k$-ary strings of length $n$.  A cut-down DB sequence of length $L$ also corresponds to a universal cycle of length $L$; however, the corresponding set $\mathbf{S}$ is not necessarily known \emph{a priori}. 
A \defo{UC-successor} for $\mathbf{S}$  is a feedback function whose corresponding FSR can be repeatedly applied to construct a universal cycle for $\mathbf{S}$ starting from any string in $\mathbf{S}$ (see the upcoming Algorithm~\ref{algo:MC} for a specific example).  When 
$\mathbf{S} = \Sigma^n$ a UC-successor is said to be a \defo{de Bruijn-successor}.

\subsection{Cycle joining}
   
One of the most common ways to construct a DB sequence is by applying the  \defo{cycle-joining method}~\cite{golomb2017}, which is akin to Hierholzer's method for finding Euler cycles in graphs~\cite{hierholzer}.  This approach repeatedly joins pairs of disjoint (universal) cycles that share a node $v=a_2\cdots a_n$ in $G(n-1,k)$.  The two cycles are said to be joined via  a \defo{conjugate pair} ($xa_2\cdots a_n$, $ya_2\cdots a_n$), where $x$ and $y$ correspond to the labels of incoming edges to $v$ for each cycle. This leads to the following lemma which is implicitly applied in all cycle-joining constructions and formalized for $k=2$ in~\cite{dbrange}.

\begin{lemma} \label{thm:concat}
Let $\mathbf{S}_1$ and $\mathbf{S}_2$ be disjoint subsets of $\Sigma^n$
such that $xa_2\cdots a_n \in \mathbf{S}_1$ and $ya_2\cdots a_n \in \mathbf{S}_2$; $(xa_2\cdots a_n, ya_2\cdots a_n)$ is a conjugate pair.
If $U_1$ is a universal cycle for $\mathbf{S}_1$ and $U_2$ is a universal cycle for $\mathbf{S}_2$, each with prefix $a_2\cdots a_n$, 
then $U = U_1U_2$ is a universal cycle for $\mathbf{S}_1 \cup \mathbf{S}_2$.
\end{lemma}

When the initial cycles are those induced by an underlying nonsingular FSR, the joining of the cycles can be viewed as a tree in the binary case.  As an example,  Figure~\ref{fig:fulltree} illustrates the PCR cycles for $n=6$ and $k=2$, and one way they can be joined together to create a DB sequence, or equivalently, an Euler cycle in $G(5,2)$.  When extending this idea to larger alphabet sizes, the tree visualization no longer applies in general (see~\cite{karyframework} and the upcoming Figure~\ref{fig:T43}).

\begin{figure}[ht]
    \centering
    \resizebox{!}{4.2in}{\includegraphics{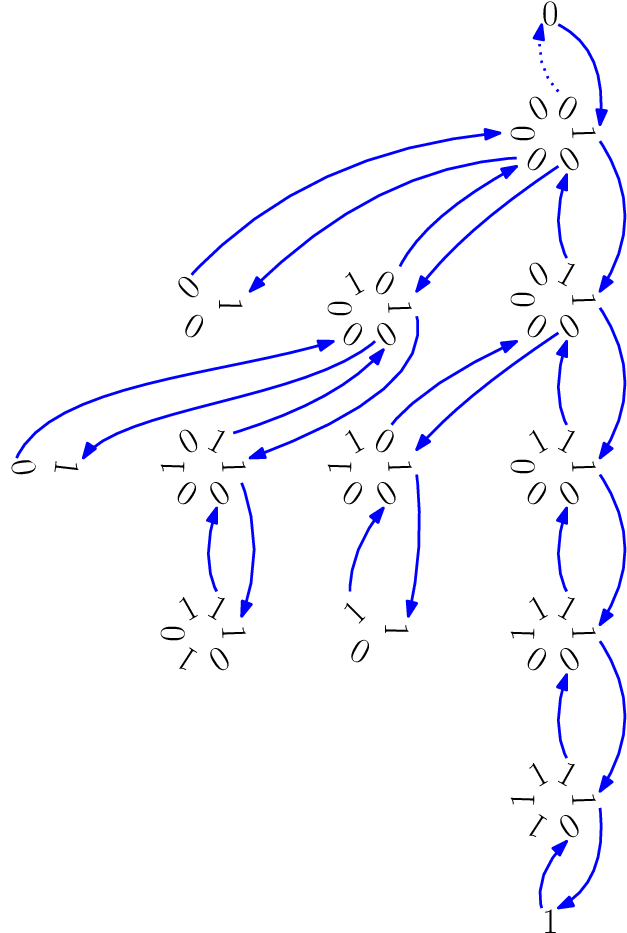}}
    \caption{The 14 PCR cycles for $n=6$ joined by applying the de Bruijn successor {\sc PCR3}. }
    \label{fig:fulltree}
\end{figure}

A general framework based on the cycle-joining approach leads to many simple UC-successors~\cite{binaryframework}.  Application of this framework rediscovers many previously known DB sequence constructions including one by Jansen~\cite{jansen} that was revisited in~\cite{SWW16} with respect to the PCR.  This particular construction starts with $[0]$ as the root cycle and repeatedly joins PCR cycles by increasing weight via  conjugate pairs $(1a_2\cdots a_n, 0a_2\cdots a_n)$, where $a_2\cdots a_n1$ is the necklace representative of the new PCR cycle begin joined. 
This construction is illustrated in Figure~\ref{fig:fulltree}, where the symbol pointed to by a downward edge is the ``last symbol'' in the corresponding cycle's necklace representative.
Thus, given a necklace $\alpha = a_1\cdots a_n \neq 0^n$, the parent cycle of $[\alpha]$ in the ``cycle-joining tree''
rooted at $[0]$ is $[a_1\cdots a_{n-1}0]$; the conjugate pair that joins them is $(0a_1\cdots a_{n-1}, 1a_1\cdots a_{n-1})$.
The resulting de Bruijn successor (below),  labeled {\sc PCR3} in~\cite{binaryframework} and  the ``Granny'' in~\cite{concattree},  is perhaps the simplest of all de Bruijn successors.  Note 
$\overline{x}$ denotes the complement of the bit $x$.

\begin{result}
\noindent {\bf PCR3 de Bruijn successor}:

\medskip

{\sc PCR3}$(\alpha) = \left\{ \begin{array}{ll} \overline{a}_1 &\ \ \mbox{if $\color{blue}{a_2a_3\cdots a_n1}$ is a necklace;}\\ {a_1} \ &\ \ \mbox{otherwise.}\end{array} \right.$ 
\end{result}

\noindent
When the FSR with feedback function {\sc PCR3} is repeatedly applied to the starting string $000000$ for $n=6$, as illustrated in  Figure~\ref{fig:fulltree}, it produces the  DB sequence in (\ref{eq:DB6}),
%
 where the first bit of the current string $\alpha$ is output before each application of the rule (see the upcoming Algorithm~\ref{algo:MC}).  The arcs between cycles in  Figure~\ref{fig:fulltree} correspond to the cases when {\sc PCR3} returns $\overline{a}_1$. 

%

Of course, the cycle-joining process yielding {\sc PCR3} can be applied to any subset of PCR cycles as long as they are ``connected'' via the defined conjugate pairs, i.e., the PCR cycles form a subtree of the complete cycle joining tree induced by {\sc PCR3}\footnote{See a related discussion in~\cite{concattree}.}.  If $\mathbf{S}$ is the set of strings belonging to some PCR cycle in such a subset (subtree), then let \blue{$\mathcal{S}$} denote the set of all possible sets $\mathbf{S}$.  In particular, we will be interested in a set $\mathbf{S} \in \mathcal{S}$ obtained from the PCR cycles with weight less than some $m>0$ together with a subset of PCR cycles with weight $m$ (see Example~\ref{exam:MC}).  A UC-successor for any $\mathbf{S} \in \mathcal{S}$ can be obtained from the PCR3 de Bruijn successor by additionally ensuring that {\sc PCR3}($a_1a_2\cdots a_n$) maps to $\overline{a}_1$ only if $a_2\cdots a_n\overline{a}_1$ is in $\mathbf{S}$, i.e., it does not attempt to join a cycle outside the specific subset.  

\begin{result}
\noindent {\bf PCR3 UC successor for $\mathbf{S} \in \mathcal{S}$}:

\medskip

{\sc PCR3$'$}$(\alpha) = 
\left\{ 
   \begin{array}{ll} 
     \overline{a}_1 &\ \ \mbox{if $a_2a_3\cdots a_n1$ is a necklace \color{blue}{and $a_2a_3\cdots a_n\overline{a}_1 \in \mathbf{S}$;}}\\ {a_1} \ &\ \ \mbox{otherwise.}
  \end{array} 
\right.$

\end{result}

Observe that {\sc PCR3} is just a special case of {\sc PCR3}$'$ when $\mathbf{S}$ is the set of all binary strings of length $n$.  Note that if $a_1\cdots a_n$ is in $\mathbf{S}$, then so is $a_2\cdots a_na_1$; they belong to the same PCR cycle. Starting with any string $\alpha \in \mathbf{S}$, Algorithm~\ref{algo:MC} applies this UC-successor to construct a universal cycle for $\mathbf{S}$, applying the original definition for {\sc PCR3}.  Later, we will specify further implementation details for a specific $\mathbf{S}$.

\begin{algorithm}[th]           
\caption{Pseudocode for constructing a universal cycle for $\mathbf{S} \in \mathcal{S}$ assuming  $\alpha = a_1a_2\cdots a_n \in \mathbf{S}$.} \label{algo:MC}   
        
\begin{algorithmic} [1]                   
\Procedure{UC}{$\alpha$}
\For{$i\gets 1$ {\bf to} $|\mathbf{S}|$}
    \State \Call{Print}{$a_1$}

    \State $x \gets$ \Call{PCR3}{$\alpha$}
    \State $\beta \gets a_2\cdots a_n x$
     \If{$\beta \notin \mathbf{S}$ } \ \  $x \gets \overline{x}$ \EndIf

    \State $\alpha \gets a_2\cdots a_n x$ 
  
\EndFor

\EndProcedure
\end{algorithmic}
\end{algorithm} 

\begin{lemma}
Algorithm~\ref{algo:MC} generates a universal cycle for $\mathbf{S}$, where $\mathbf{S} \in \mathcal{S}$.
\end{lemma}

This DB sequence constructed by PCR3 has an important property not shared by the other simple feedback functions 
presented in~\cite{binaryframework}: The strings belonging to  $\mathbf{Z}_i$ (defined in Section~\ref{sec:cut}) appear contiguously as substrings in the corresponding DB sequence.  

\section{Etzion's approach} \label{sec:etzion}
In this section, we outline Etzion's~\cite{etzion86} approach for constructing a binary cut-down DB sequence.   Recall that $L$ is the length of the cut-down DB sequence and $2^{n-1} < L \leq 2^n$.
The two primary steps in Etzion's construction are as follows, where the \defo{surplus} $s$ is an integer in $\{0,1,\ldots , n{-}1\}$:  \smallskip
\begin{enumerate}
    \item Construct a Main Cycle (MC) that has length $L+s$.
    \item Cut out up to $\lceil \log s \rceil$ small cycles from the MC to yield a cycle of the desired length $L$.
\end{enumerate}

To construct an MC, a subset of the PCR cycles are selected based on their weight and period.  Enumeration of strings by weight and period determine which cycles to include.  Considering the set of $k$-ary strings of length $n$ (so we can generalize in later sections), let 

\begin{itemize}
    \item $A(w)$ denote the number of strings  with weight $\leq w$,
    \item $B(w,p)$  denote  the number of strings  with weight $w$, and period $p$, and
    \item $C(w,p)$ denote  the number of strings with weight $w$, and period $\leq p$.
\end{itemize}

\noindent
\noindent
In the binary case when $k=2$, clearly $A(w) = \sum_{j=0}^w {n \choose j}$. Recall from the observations in Example~\ref{exam:PCR} that $B(w,p) = pL_k(p,wp/n)$, assuming $p$ divides $n$.
Using these values, let

\begin{itemize}
    \item $m =$ the smallest weight $m$ such that $A(m) \geq L$,
    \item $h = $ the smallest period such that $A(m-1) + C(m,h) \geq L$, and
    \item $t = $ the smallest integer such that $A(m-1) +C(m,h-1) + th \geq L$.
\end{itemize}

\noindent
These values can be used to define the surplus $s$ as $A(m-1) + C(m,h-1) + ht - L$.
%
%

An MC is the result of joining together all PCR cycles of weight less than $m$ together with all PCR cycles of weight equal to $m$ and period less than $h$ together with exactly $t$ PCR cycles of weight $m$ and period $h$. As cycles are joined, a counter is maintained to keep track of the number of cycles of weight $m$ and period $h$ already joined into the MC.  Thus, the specific $t$ PCR cycles joined are not necessarily known \emph{a priori}. 
Etzion's original presentation adds cycles of weight $m$  starting with the largest period. We made one minor departure from this approach by adding cycles of weight $m$ starting from the smallest period, which handles a special case defined later.

\begin{exam}  \label{exam:MC} \small
Consider $L=46$ and $n=6$.  Since  
$A(3) = 42$ and $A(4) = 57$, we have $m=4$.
Since
$B(4,1) = B(4,2) = 0, \  B(4,3) = 3, \ B(4,4) = B(4,5) = 0,  \ B(4,6) = 12,$
we have $C(4,5) = 3$ and $C(4,6) = 15$.  Thus $h=6$.  Since $A(3) + C(4,5) + h = 42+3+6 =51$, we have $t=1$ and surplus $s=5$. Applying {\sc PCR3} as the underlying method for joining cycles, the following illustrates the construction of the MC starting with 000000.   
\medskip

    \begin{center}
    \resizebox{!}{3.5in}{\includegraphics{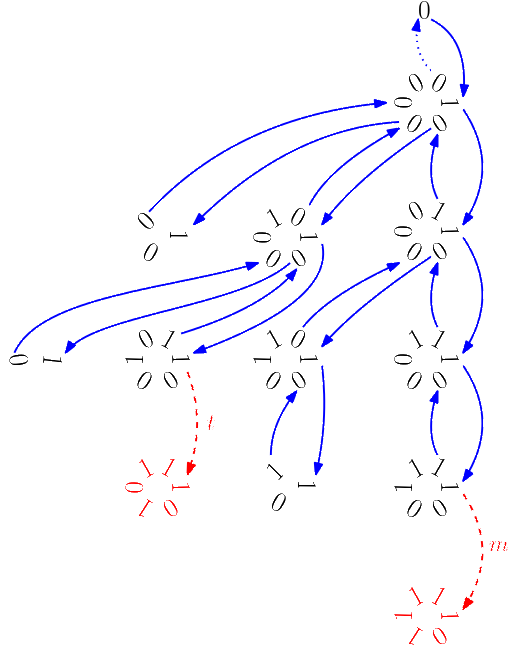}}
    \end{center}

 \noindent
The MC joins all cycles with weight less than $m=4$, 
all cycles with weight $m=4$ and period less than $h=6$, and exactly $t=1$ cycles with weight $m=4$ and period $h=6$.  Note that the cycle pointed to by the dashed arc labeled $m$ is not added since it has weight greater than $m=4$.
The cycle pointed to by the dashed arc labeled $t$ is not added since there was already $t=1$ cycles added with weight $m=4$ and period $h=6$.  The resulting MC with length 51 is 
$$
[~000000111100111000110110100110000101100101010001001~]. $$
\end{exam}

The second step involves \emph{cutting out} small cycles whose combined length totals the surplus $s$.  When $s=1$, the cycle $[0]$ is easily removed.  However, for arbitrary lengths, finding and removing such small cycles is non-trivial and the construction of the MC is critical for the ease in which these small cycles can be removed.  Etzion's approach requires cutting out up to $\lceil \log n \rceil$ cycles of the form $[0^{i}1]$ where $i+1<n$ is a power of 2. For example, when $n=14$, the possible cycles to remove would be of the form $[01]$, $[0001]$, $[00000001]$.  Depending on $s$, the cycle $[0]$ may also need to be cut out (the removal of a single 0).  Details on how to cut such cycles out are discussed in Section~\ref{sec:cut}.  There are two special cases that Etzion addresses to ensure that the aforementioned small cycles are indeed on the MC.  
\smallskip

\begin{itemize}
\item {\bf Special case \#1:} When $n=2m$, the cycle $[01]$ with weight $m$ must be included.
\item {\bf Special case \#2:} When $n=2m-1$ the cycle $[(01)^{m-1}1]$ with weight $m$ must be included.
\end{itemize}
\smallskip

As noted earlier, by considering the cycles of weight $m$ in increasing order by period, {\bf Special case \#1} is handled since the cycle $[01]$ is the unique cycle with period $2$ and will always be added in that case.  {\bf Special case \#2}  requires extra care when adding cycles of weight $m$ and period $n$, noting that there are clearly no cycles of weight $m$ and period less than $n$ when $n=2m-1$.

\section{Efficiently constructing binary cut-down DB sequences} \label{sec:binary}

In this section we apply the following two enhancements to simplify and improve Etzion's original approach for constructing a cut-down DB sequence. 

\vspace{-0.1in}

\begin{enumerate}
    \item We apply {\sc PCR3} and focus on a single cut-down DB sequence construction.
    \item We consider all cycles of the form $[0^i1]$ for $1\leq i \leq \lceil n/2 \rceil$ and cut out at most two small cycles.  
\end{enumerate}

\vspace{-0.1in}

\noindent
The latter step involves changing the definition of {\sc PCR3} for at most two strings.  
\edit{As noted in the previous section,  when defining an MC, the cycles of weight $m$ will be added by increasing (instead of decreasing) period, thus handling {\bf Special case \#1}.   To account for {\bf Special case \#2}, we assume that $\mathbf{S}$ contains all strings from the cycle $[(01)^{m-1}1]$  when $n=2m-1$.  Details for ensuring this are provided in the upcoming Algorithm~\ref{algo:cut}.}

We conclude this section by applying a ranking algorithm for fixed-weight Lyndon words to produce the first successor-rule construction of cut-down DB sequences.

\noindent

\subsection{Constructing a Main Cycle with PCR3$'$}
Recall that the substrings of length $n$ from an MC of length $L+s$ correspond to the set $\mathbf{S}$ of
\begin{itemize}
    \item all length-$n$ binary strings belonging to PCR cycles with weight less than $m$,
    \item all length-$n$ binary strings belonging to PCR cycles with weight $m$ and period less than $h$,  and
    \item a set $\mathbf{T}$ containing all length-$n$ binary strings from $t$ PCR cycles with weight $m$ and period $h$,
\end{itemize}
\noindent
where $m,h,t,s$ are the precomputed variables described in the previous section. 
%
Thus,  Algorithm~\ref{algo:MC} can be applied to construct a universal cycle for $\mathbf{S}$. 

For the remainder of this section, let $MC_\mathbf{S}$ denote the universal cycle for $\mathbf{S}$ obtained from Algorithm~\ref{algo:MC}.

\subsection{Cutting out small cycles} \label{sec:cut}

In order to cut down $MC_\mathbf{S}$ to a cycle of length $L$, ideally we  cut out a single substring of length $s$.  However, cutting out such a substring without introducing duplicate length-$n$ substrings is a challenge.  When $s \leq \lceil \frac{n}{2} \rceil$ we will demonstrate that finding such a substring is possible; otherwise we cut out two substrings whose combined length total $s$.   For our discussion let $F(a_1a_2\cdots a_n) = a_2\cdots a_nx$
where $x = $ {\sc PCR3}$'(a_1a_2\cdots a_n)$.

When $s=1$, it is straightforward to cut out the cycle $Z_1 = [0]$ by cutting the unique substring $z_1(1) = 0^n$ down to $0^{n-1}$.  
Otherwise, consider cycles of the form $Z_i= [0^{i-1}1]$ for $1 < i \leq \lceil\frac{n}{2} \rceil$.  Etzion~\cite{etzion86} considers similar cycles, but only those with length that is a power of two.  Let $\mathbf{Z}_i$ denote the set of $i$ length-$n$ strings belonging to $Z_i$.  We consider two cases for listing the $i$ strings in $\mathbf{Z}_i$ depending on whether or not $i$ divides $n$:
\begin{center}
\begin{tabular}{l|l}
~~~~~~~$i$ divides $n$ & ~~~~~$i$ does not divide $n$ \\ \hline
$z_i(1) = 0^{i-1}1\blue{(0^{i-1}1)^{a-1}}$      & $z_i(1) = 0^b1\blue{(0^{i-1}1)^{a}}$  \\
$z_i(2) = 0^{i-2}1~\blue{(0^{i-1}1)^{a-1}}~0$   & $z_i(2) =  0^{b-1}1~\blue{(0^{i-1}1)^{a}}~0$ \\
$z_i(3) = 0^{i-3}1~\blue{(0^{i-1}1)^{a-1}}~0^2$ & ~~~$\cdots$ \\
$~~~\cdots  $                             & $z_i(b{+}1)  =  1~\blue{(0^{i-1}1)^{a}}~0^b$ \\
$z_i(i)  =  1~\blue{(0^{i-1}1)^{a-1}}~0^{i-1}$  & $z_i(b{+}2) = 0^{i-1}1\red{(0^{i-1}1)^{a-1}}~0^{b+1}$ \\
                                                & $z_i(b{+}3)  = 0^{i-2}1\red{(0^{i-1}1)^{a-1}}~0^{b+2}$ \\
                                                & ~~~$\cdots$  \\
                                                & $z_i(i) = 0^{b+1}1\red{(0^{i-1}1)^{a-1}}~0^{i-1}$
\end{tabular}
\end{center}
where $a=\lfloor \frac{n}{i} \rfloor$ and $b = (n \bmod i)-1$. 
When $i$ divides $n$, the strings in $\mathbf{Z}_i$ belong to a single PCR cycle.   However, if $i$ does not divide $n$, then the strings in $\mathbf{Z}_i$ belong to two PCR cycles with different weights. 

\begin{exam}
Let $n=10$. Consider $i=2,3,4,5$, noting 2 and 5 divide 10: 

\begin{itemize}
    \item $\mathbf{Z}_2 = \{01\blue{01010101}, 10\blue{10101010}\}$,
    \item $\mathbf{Z}_3 = \{1\blue{001001001}, 001\red{001001}0, 01\red{001001}00\}$,
    \item $\mathbf{Z_4} = \{01\blue{00010001}, 1\blue{00010001}0, 0001\red{0001}00, 001\red{0001}000 \}$,
    \item $\mathbf{Z_5} = \{00001\blue{00001}, 0001\blue{00001}0, 001\blue{00001}00, 01\blue{00001}000, 1\blue{00001}0000 \}.$
\end{itemize}

\vspace{-0.1in}

\end{exam}
%

\noindent
 The reason why we cannot cut out a single cycle $[0^{s-1}1]$ when $s > \lceil n/2 \rceil$ is because the strings in $\mathbf{Z}_s$ do not appear contiguously on  $MC_\mathbf{S}$.

%
%

For $n \geq 2$, the strings in $\mathbf{Z}_i$ have weight at most $\lceil \frac{n}{2} \rceil$. This upper limit is obtained only when $i=2$ for $Z_2 =[01]$. Thus, the strings in $\mathbf{Z}_i$ always appear as substrings in $MC_\mathbf{S}$ since we already accounted for the two special cases defined at the end of Section~\ref{sec:etzion}.
%
The upcoming Lemma \ref{lem:F} demonstrates  that the strings $z_i(1), \ldots , z_i(i)$ appear contiguously as substrings on $MC_{\mathbf{S}}$.  Its proof applies the following obvious property of necklaces.
\begin{remark} \label{rem:neck}
A string $0^x1u0^{x+1}v$ is a {\bf not} a necklace for any integer $x$ and binary strings $u,v$.
\end{remark}

\begin{lemma}  \label{lem:F}
For $1 < i \leq \lceil \frac{n}{2} \rceil$ and $1 \leq j < i$, $F(z_i(j)) = z_i(j{+}1)$.
\end{lemma}
\begin{proof}
Let $z_i(j) = b_1b_2\cdots b_n$.
If $i$ divides $n$, then by Remark~\ref{rem:neck},  {\sc PCR3}$'$($z_i(j)) = b_1$ and the result holds.
If $i$ does not divide $n$, then again by Remark~\ref{rem:neck},  {\sc PCR3}$'$($z_i(j)) = b_1$ for all cases except $j=b+1$.  In this case $b_2\cdots b_n1$ is a necklace and therefore {\sc PCR3}$'$($z_i(j))  = \overline{b}_1$, and again the result holds.
\end{proof}

%
%
%

Algorithm~\ref{algo:step2} is obtained by making the following three modifications to Algorithm~\ref{algo:MC} for some $1 \leq i \leq \lceil \frac{n}{2} \rceil$:
\begin{enumerate}
    \item the initial string $\alpha$ is in $\mathbf{S} \setminus \mathbf{Z}_i$ (i.e, it is not one of the strings being cut out),
    \item the value for $x$ is complemented after Line 6 if $\beta = z_i(1)$ (cutting out the cycle $Z_i$), and
        \item the {\bf for} loop iterates $i$ less  times (to account for the cycle being cut out).
\end{enumerate}
\begin{algorithm}[th]           
\caption{Pseudocode for constructing a universal cycle for $\mathbf{S} \setminus \mathbf{Z}_i$, where $\alpha \in \mathbf{S} \setminus \mathbf{Z}_i$.
Assume $\mathbf{S}$ includes the strings from the cycle  $[(01)^{m-1}1]$  when $n=2m-1$.} \label{algo:step2}   
        
\begin{algorithmic} [1]                   
\Procedure{UC2}{$\alpha$}
\For{$i\gets 1$ {\bf to} $L+s-i$}
    \State \Call{Print}{$a_1$}

    \State $x \gets$ \Call{PCR3}{$\alpha$}
    \State $\beta \gets a_2\cdots a_n x$
     \If{$\beta \notin \mathbf{S}$ } \ \  $x \gets \overline{x}$ \EndIf
    \blue{ \If{$\beta = z_i(1)$} \ \ $x \gets \overline{x}$ \EndIf}
    \State $\alpha \gets a_2\cdots a_n x$ 
  
\EndFor

\EndProcedure
\end{algorithmic}
\end{algorithm} 

\begin{lemma} \label{thm:cut}
Algorithm~\ref{algo:step2} constructs a cut-down DB sequence of length $L+s-i$. 
\end{lemma}
\begin{proof}
Consider Algorithm~\ref{algo:MC}. It constructs a universal cycle $MC_\mathbf{S}$  of length $|\mathbf{S}| = L+s$.  Recall that $\mathbf{Z}_i \subseteq \mathbf{S}$. If the algorithm is initialized with a string in $\mathbf{S} \setminus \mathbf{Z}_i$,
then by Lemma~\ref{lem:F}, the first time $\alpha$ corresponds to a string in $\mathbf{Z}_i$ is when $\alpha = z_i(1)$.  Let $\alpha_0$ be the string such that $F(\alpha_0) = z_i(1)$.  Then

\medskip

$\alpha_0 = \left\{ \begin{array}{ll}
          10^{n-1}		&\ \  \mbox{if $i=1$;}\\
           0~\blue{(0^{i-1}1)^{a-1}}~0^{i-1}		&\ \  \mbox{if $i>1$ and $i$ divides $n$;}\\
          10^{b}1\red{(0^{i-1}1)^{a-1}}~0^{i-1}    &\ \   \mbox{if $i$ does not divide $n$,}
\end{array} \right.$

\medskip

\noindent
where $a=\lfloor \frac{n}{i} \rfloor$ and $b = (n \bmod i)-1$.
In each case $(\alpha_0, z_i(i))$ is a conjugate pair.  Thus, complementing the value for $x$ after Line 6 when $\beta = z_i(1)$ cuts out precisely the strings in $\mathbf{Z}_i$, effectively reversing the cycle joining detailed in Lemma~\ref{thm:concat}; i.e., it cuts out the cycle $Z_i$.  
Finally, by limiting the number of iterations of the {\bf for} loop to $L+s-i$ to account for cutting out this cycle, the resulting Algorithm~\ref{algo:step2} generates a cut-down DB sequence of length $L+s-i$.
\end{proof}

Let $j = \lceil n/2 \rceil$. If $s \leq j$, then let  $\mathbf{R} = \{ z_s(1) \}$; otherwise let $\mathbf{R} = \{ z_j(1), z_{s-j}(1) \}$. 
Since the strings in each $\mathbf{Z}_i$ are distinct, we can modify Line 7 of Algorithm~\ref{algo:step2} to  test if $\beta \in \mathbf{R}$ to remove either one or two small cycles; it follows from the proof of Lemma~\ref{thm:cut} that the resulting algorithm will generate a cut-down DB sequence of length $L$.  

Algorithm~\ref{algo:cut} applies this modification along with implementation details required to efficiently test if a string belongs to $\mathbf{S}$.
%
It starts with $\alpha=0^{n-1}1$, which is a string that does not belong to any $\mathbf{Z}_i$. Computing the weight and period of the current length-$n$ string $\alpha$ leads to an $ \mathcal{O}(n)$-time membership tester for $\mathbf{S}$.  In this implementation, the subset $\mathbf{T}$ of $\mathbf{S}$ is not known \emph{a priori}. Thus, we keep track of how many cycles of weight $m$ and period $h$ we have seen so far, adding them if we do not exceed $t$. This is maintained by the counter $t'$.  In the special case when $n=2m-1$, a flag is set to make sure the cycle $[(01)^{m-1}1]$ is included; the first string visited on this cycle is $(01)^{m-1}1$.  

\begin{theorem}
Algorithm~\ref{algo:cut} generates a binary cut-down DB sequence of length $L$ in $\mathcal{O}(n)$ time per symbol using $\mathcal{O}(n)$ space.
\end{theorem}


\begin{algorithm}[ht]      
\caption{Pseudocode for constructing a binary cut-down DB sequence of length $L$ assuming precomputed values $m,h,t$ and the set $\mathbf{R}$}  
\label{algo:cut}      

\begin{algorithmic} [1]                   
\Procedure{Cut-down}{}
\State $\alpha = a_1a_2\cdots a_n \gets 0^{n-1}1$
\State $t' \gets 0$
\If{$n=2m-1$} \ \ $flag \gets 1$  
\Else \ \ $flag \gets 0$
\EndIf

\Statex
\For{$i\gets 1$ {\bf to} $L$}
    \State \Call{Print}{$a_1$}

    \Statex  
    \State \blue{$\triangleright$ UC-successor for the Main Cycle}
    \State $w \gets$ weight of $\alpha$  
    \State $x \gets$ \Call{PCR3}{$\alpha$}
    \State $\beta \gets a_2\cdots a_n x$
     \If{$w = m$ {\bf and} $w-a_1+x = m+1$ } \ \  $x \gets \overline{x}$ \EndIf
     \If{$w = m-1$ {\bf and} $w-a_1+x = m$ }
        \If{ $\per(\beta) > h$}  \ \  $x \gets \overline{x}$ \EndIf
        \If{ $\per(\beta) = h$} 
             \If{$\beta = (01)^{m-1}1$} \ \ $\mathit{flag}\gets 0$ \EndIf
             \State \blue{$\triangleright$ Cut out excess cycles of weight $m$ and period $h$}
             \If{$t'=t$ {\bf or} ($t'+1 = t$ {\bf and} $\mathit{flag} = 1$)}  \ \  $x \gets \overline{x}$  \ 
             \Else  \ \ $t' \gets t'+1$
             \EndIf

        \EndIf
     \EndIf
    
    \Statex
    \If{ $\beta \in \mathbf{R}$ } \ \ $x \gets \overline{x}$  \ \  \ \     \blue{$\triangleright$ Cut out small cycle(s)}   \EndIf
    \State $\alpha \gets a_2\cdots a_n x$

\EndFor

\EndProcedure
\end{algorithmic}
\end{algorithm}

\begin{exam}  \label{exam:cut-out} \small
Recall the MC from Example~\ref{exam:MC} of length 51 where $L=46$ and $s=5$.  Setting $\mathbf{R} = \{001001, 010101\}$, the cycles $[001]$ and $[01]$ are cut out to obtain a cut-down DB sequence of length $L$.  
This is illustrated below where the dashed red arcs are not followed.

    \begin{center}
    \resizebox{!}{2.7in}{\includegraphics{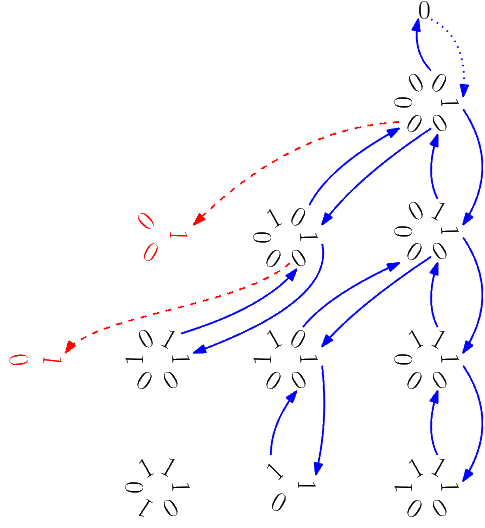}}
    \end{center}

\noindent
The resulting cut-down DB sequence of length $L=46$ starting from $\alpha = 000001$ is
$$
[~0000011110011100011011010011000010110010100010~].
$$

\noindent
{\bf Note:}
Cutting out cycles $Z_i$ where $i$ does not divide $n$ involves cutting chunks out of two PCR cycles.

\vspace{-0.05in}
\end{exam}

\subsection{A successor-rule construction} \label{sec:successor}

In this section we define a successor rule that can be used to construct a binary cut-down DB sequence of length $L$.  Unlike the algorithm in the previous section, no context is required when iterating through the successor rule; 
\edit{furthermore, determining whether or not a length-$n$ string is found on the sequence can be computed efficiently.}

Recall that the set $\mathbf{S}$, which contains the length-$n$ substrings of an MC, depends on a set $\mathbf{T}$ which contains 
$ht$ binary strings of length $n$, weight $m$, and period $h$. 
As there are many different possible sets $\mathbf{T}$ that meet this criteria, the key to our successor rule is to consider a single set $\mathbf{T}$. Specifically, let $\mathbf{T}$ denote the length-$n$ strings belonging to PCR cycles corresponding to the $t$ {\bf lexicographically largest} Lyndon words of length $h$ and weight $mh/n$.  Considering the $t$ largest Lyndon words instead of the $t$ smallest is important since it ensures the cycles required in the special cases are always included; each special cycle corresponds to the lexicographically largest Lyndon word for a given weight.  Let $\mathbf{S'}$ denote the set $\mathbf{S}$ that includes this specific subset $\mathbf{T}$.  

Determining whether or not a string belongs to $\mathbf{T}$ can be accomplished by applying a recent algorithm to rank fixed-weight Lyndon words as they appear in lexicographic order~\cite{hartman}.  Let $\alpha = a_1a_2\cdots a_n$ denote a binary string with period $n$ and weight $m$.  Let $\sigma$ be the lexicographically smallest rotation of $\alpha$, i.e., $\sigma$ is a Lyndon word.  Let {\sc RL}($a_1a_2\cdots a_n$) denote the rank of $\sigma$ in a lexicographic listing of length-$n$ Lyndon words with weight $m$. Computing $\sigma$ can be done in $\mathcal{O}(n)$ time~\cite{Booth} and computing the rank of $\sigma$ can be done with $\mathcal{O}(n^3)$ $n$-bit operations~\cite{hartman}.  Applying these functions, the procedure {\sc CutDownSuccessor}($\alpha$) given in Algorithm~\ref{algo:successor} will construct a cut-down DB sequence of length $L$; it is a universal cycle for $\mathbf{S'} \setminus \mathbf{C}$, where $\mathbf{C}$ consists of the strings cut out from the main cycle.  Specifically,  

$\mathbf{C} = \left\{ \begin{array}{ll}
          \mathbf{Z}_s		&\ \  \mbox{if $s \leq j$;}\\
          \mathbf{Z}_j \cup \mathbf{Z}_{s-j}			&\ \  \mbox{otherwise,}\\
\end{array} \right.$

\noindent
where $j= \lceil \frac{n}{2} \rceil$.
The algorithm can be initialized to any string $\alpha$ in $\mathbf{S'} \setminus \mathbf{C}$.

\begin{algorithm}[ht]           
\caption{A successor rule based construction of a cut-down DB sequence of length $L$ based on the precomputed values $m,h,t$ and the set $\mathbf{R}$.} \label{algo:successor}   
        
\begin{algorithmic} [1]                   
\Procedure{CutDownSuccessor}{$\alpha = a_1a_2\cdots a_n$}

\For{$i\gets 1$ {\bf to} $L$}
    \State \Call{Print}{$a_1$}
    
    \Statex

    \State \blue{$\triangleright$ Context-free UC-successor for the Main Cycle}
    \State $w \gets$ weight of $\alpha$ 
    \State $x \gets$ \Call{PCR3}{$\alpha$}
    \State $\beta \gets a_2\cdots a_nx$
    
    \If{$w > m$ {\bf or} ($w=m$ {\bf and} $\per(\beta) > h$) } \ $x \gets \overline{x}$ \EndIf
      
    \If{$w = m-1$ {\bf and} $w-a_1+x = m$ }
        \If{ $\per(\beta) > h$} \  $x \gets \overline{x}$ \EndIf
        \If{ $\per(\beta) = h$  {\bf and}  $L(h,mh/n) - $\Call{RL}{$a_1a_2\cdots a_h$} $+1 > t$ }    \  $x \gets \overline{x}$ \EndIf
     \EndIf
     \Statex
    \If{ $\beta \in \mathbf{R}$ } \ \ $x \gets \overline{x}$ \ \  \ \  \blue{$\triangleright$ Cut out small cycle(s)}   \EndIf   
    \State $\alpha \gets a_2\cdots a_n x$ 
  
\EndFor

\EndProcedure
\end{algorithmic}
\end{algorithm} 

\begin{theorem}
{\sc CutDownSuccessor}($\alpha$) generates a cut-down DB sequence of length $L$ starting from any $\alpha \in \mathbf{S'} \setminus \mathbf{C}$ where each symbol is generated using  $\mathcal{O}(n^{1.5})$-amortized simple operations on $n$-bit numbers and polynomial space.  \edit{Furthermore, determining whether or not a string $\beta$ belongs to $\mathbf{S'} \setminus \mathbf{C}$ can be computed using $\mathcal{O}(n^3)$ operations on $n$-bit numbers.}
\end{theorem}
\begin{proof}
By pre-computing $L_2(h,mh/n)$, each iteration of the {\bf for} loop 
requires $\mathcal{O}(n)$ time not including the time
required by a possible call to {\sc RL}.  A call to {\sc RL} is made exactly once for each cycle of weight $m$ and period $h$; it is made when
$\alpha$ has weight $m-1$ and $\beta$ has weight $m$ and period $h$.  These cycles correspond to Lyndon words of length $h$ and weight $mh/n$ and thus the call to {\sc RL} is made exactly $L_2(h,mh/n)$ times.  Clearly, $L_2(n,w) \leq {n \choose w} / {n}$,  and it is well-known that ${n \choose w}$ is $\mathcal{O}(2^n/\sqrt{n})$~\cite[p.~35]{Yudell1969}.  Since the running time of {\sc RL} is $\mathcal{O}(n^3)$, the work done by all calls to RL amortized over the $\Theta(2^n)$ iterations of the {\bf for} loop is $\mathcal{O}(\frac{n^3}{n\sqrt{n}}) = \mathcal{O}(n^{1.5})$.  The function {\sc RL} requires polynomial space to precompute some tables~\cite{hartman}.

\edit{
Consider a length-$n$ string $\beta$.
Since there are less than $n$ strings in $\mathbf{C}$, 
testing whether or not $\beta$ is in $\mathbf{C}$ can be determined in $\mathcal{O}(n^2)$ time.  
If $\beta$ has weight less than $m$, or if it has weight $m$ and period less than $h$, then it belongs to $\mathbf{S}$.  
If $\beta$ has weight greater than $m$, or weight equal to $m$ and period greater than $h$, then it does not belong to $\mathbf{S}$.  These cases can be handled in $\mathcal{O}(n)$ time. If
$\beta$ has weight $m$ and period $h$, then we compute {\sc RL}$(\beta)$ to determine if it is in $\mathbf{S}$ (see line 11 in Algorithm~\ref{algo:successor}); this takes $\mathcal{O}(n^3)$ time as noted earlier.  Thus, testing whether or not $\beta$ is in  $\mathbf{S'} \setminus \mathbf{C}$ can be determined in $\mathcal{O}(n^3)$ time.}
\end{proof}

\section{Cut-down DB sequences for $k>2$} \label{sec:kary}

In this section we extend the strategy for constructing binary cut-down DB sequences to alphabets of arbitrary size.\footnote{In \cite {etzion86}, Etzion concludes by stating that his binary construction of cut-down DB sequences can be generalized to alphabets of arbitrary size; however, no details are provided.} 
When generalizing the binary approach, the selection of an underlying de Bruijn successor is critical to a simple construction for a cut-down DB sequence when $k>2$.  The {\sc PCR3} successor applied in the binary case has two natural generalizations for $k \geq 2$. These generalizations have been previously 
defined as $g_3$ and $g'_3$ in~\cite{karyframework}.
The key to selecting an underlying de Bruijn successor is to allow for the simplest possible method to cut out small cycles; for our purposes, $g'_3$ is the one that allows for this. It can be computed in $\mathcal{O}(n)$ time using $\mathcal{O}(n)$ space~\cite{karyframework}; we relabel this de Bruijn successor to {\sc PCR3}$_k$ below\footnote{{\sc PCR3}$_k$ is labeled PCR3 (alt) in~\cite{dborg}.}, noting $\alpha = a_1a_2\cdots a_n$.

\begin{result}
\noindent {\bf {\sc PCR3}$_k$ de Bruijn successor}:

\medskip

\noindent
Let $y$ be the smallest symbol in $\{1,2,\ldots, k{-}1\}$  such that $a_2a_3\cdots a_{n} y$ is a necklace, or $y=0$ if no such symbol exists.
Then:

\smallskip

{\sc PCR3}$_k(\alpha) = \left\{ \begin{array}{ll}
           k{-}1		&\ \  \mbox{if  $y > 0$ and $a_1 = y{-}1$;}\\
          a_1{-}1		&\ \   \mbox{if  $y > 0$ and $a_1 > y{-}1$;}\\
         {a_1} \  		&\ \  \mbox{otherwise.}\end{array} \right.$

\end{result}

\noindent
Interestingly, the original presentation of {\sc PCR3}$_k$ is described as a UC-successor for $k$-ary strings with weight less than or equal to some fixed $w$.  However, we choose to apply the restriction separately from the successor as we additionally must consider the periods of the cycles.

The challenge when extending to larger alphabets is that the cycle-joining approach may no longer apply disjoint conjugate pairs. Instead, several cycles which have common substrings of length $n{-}1$ can be joined in a cyclic fashion; the same string can belong to more than one conjugate pair used during the cycle-joining process.
As an example, see Figure~\ref{fig:T43} which illustrates  how {\sc PCR3}$_k$ joins PCR cycles for $n=3$ and $k=4$ to obtain the DB sequence
 \[ [~0\blue{003303203103}002302202102001301201133132131123122333232221211101~].\]  
As a specific example of how {\sc PCR3}$_k$ joins cycles, consider the cycles $[003]$, $[031]$, $[032]$, and $[033]$.  Consider a UC that has joined $[003]$ but none of the other listed cycles.  Then $[031]$ is joined via the conjugate pair (003, 103).  Subsequently, $[032]$ is joined via the conjugate pair (003, 203), and finally $[033]$ is joined via the conjugate pair (003, 303).  Observe that 003 is used in all three conjugate pairs.  This leads to the substring
\blue{003303203103} highlighted in the above DB sequence obtained via repeated application of Lemma~\ref{thm:concat}.  Starting from 003, the cycles are visited in the order [003], [033], [032], [031] as illustrated in Figure~\ref{fig:T43}.

\begin{figure}[ht]
    \centering
    \resizebox{!}{5.1in}{\includegraphics{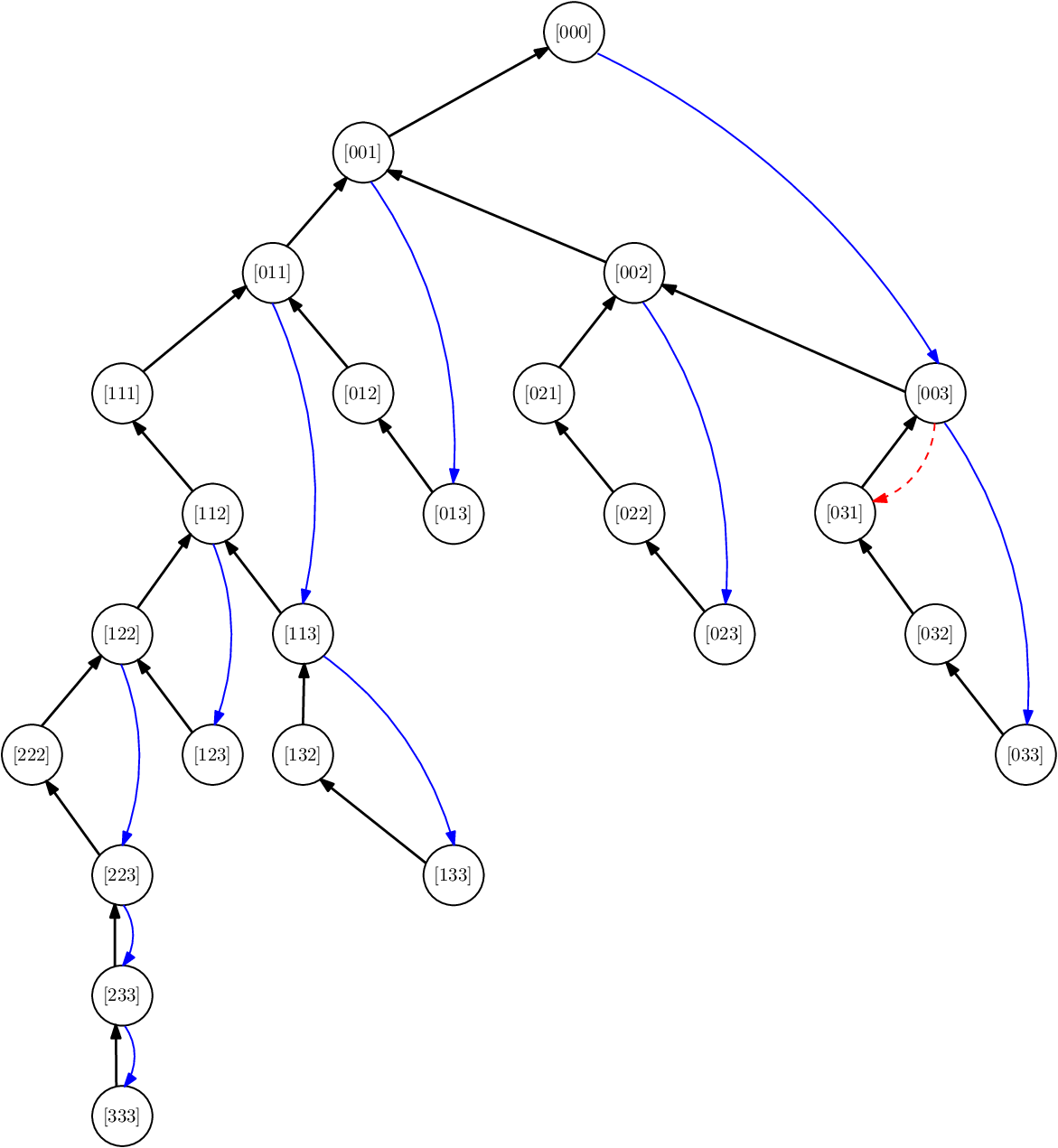}}
    \caption{Joining PCR cycles for $n=3$, $k=4$ by applying {\sc PCR3}$_k$.  The cycles drawn at the same level have the same weight. When constructing an MC with maximum weight $m = 4$, the dashed red edge illustrates how the cycles [032] and [033] can be cut out while still including [031]. The directions of the arrows indicate the order the cycles are visited when starting from [000]. }
    \label{fig:T43}
\end{figure}


\subsection{The MC for $k>2$}
As in the binary case, let $\mathbf{S}$ denote the set of strings belonging to an MC with a corresponding subset $\mathbf{T}$ of $ht$ strings belonging to $t$ cycles of weight $m$ and period $h$. Recall $|\mathbf{S}| = L + s$ and $k^{n-1} < L \leq k^n$.   The choice of the underlying de Bruijn successor {\sc PCR3}$_k$ allows for a simple construction of the MC when $k>2$. 

Given $\alpha = a_1a_2\cdots a_n$, let 
$F_k(\alpha) = a_2\cdots a_n${\sc PCR3}$_k(\alpha)$.
Like the binary case, we need only make a minor modification to the de Bruijn successor when attempting to branch to a PCR cycle with larger weight that does not belong to the MC.  In particular, if $F_k(\alpha)$ is not in $\mathbf{S}$, then it must be that $y=a_1+1$ is the smallest value such that $a_2\cdots a_ny$ is a necklace. This is the only case where $F_k(\alpha)$ has weight greater than $\alpha$, noting
{\sc PCR3}$_k(\alpha) = k-1$.  Instead, the next symbol in the universal cycle should be the {\bf largest symbol so the resulting string belongs to ${\mathbf S}$}.  This ensures we only cut out PCR cycles containing strings that do not belong to $\mathbf{S}$.   The resulting algorithm is detailed in Algorithm~\ref{algo:MC2}.  It differs from 
Algorithm~\ref{algo:MC} only at Line 6 to handle when attempting to branch to a PCR cycle not in the MC.
%


\begin{algorithm}[th]           
\caption{Pseudocode for constructing a universal cycle for $\mathbf{S}$ assuming the input $\alpha = a_1a_2\cdots a_n \in \mathbf{S}$ and $k>2$.} \label{algo:MC2}   
        
\begin{algorithmic} [1]                   
\Procedure{$k$-MC}{$\alpha$}
\For{$i\gets 1$ {\bf to} $|\mathbf{S}|$}
    \State \Call{Print}{$a_1$}

    \State $x \gets ${\sc PCR3}$_k(\alpha)$
    \State $\beta \gets a_2\cdots a_n x$
     \If{$\beta \notin \mathbf{S}$ } \ \  $x \gets$ the largest symbol such that $a_2\cdots a_n x \in \mathbf{S}$ \EndIf

    \State $\alpha \gets a_2\cdots a_n x$ 
  
\EndFor

\EndProcedure
\end{algorithmic}
\end{algorithm} 

\begin{lemma}
Algorithm~\ref{algo:MC2} generates a universal cycle for $\mathbf{S}$.
\end{lemma}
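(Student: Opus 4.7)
The plan is to establish three things in turn: (i) the set $\mathbf{S}$ decomposes as a disjoint union of complete PCR cycles; (ii) Algorithm~\ref{algo:MC2} never produces a string outside $\mathbf{S}$; and (iii) the induced FSR on $\mathbf{S}$ is nonsingular and its single orbit covers every string in $\mathbf{S}$. Assertion (i) is immediate from the definition of $\mathbf{S}$ as the union of all PCR cycles of weight less than $m$, all PCR cycles of weight $m$ and period less than $h$, and $t$ chosen PCR cycles of weight $m$ and period $h$. Assertion (ii) follows from Line 6: a valid $x$ always exists because $x=a_1$ is itself a valid candidate (it keeps us on the current PCR cycle, which is in $\mathbf{S}$ by hypothesis on $\alpha$), so taking the largest such $x$ is well-defined and in $\mathbf{S}$.

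For nonsingularity I would argue that for each $\beta=a_2\cdots a_n x \in \mathbf{S}$ there is a unique $\alpha=a_1a_2\cdots a_n \in \mathbf{S}$ mapping to $\beta$. Since $g'_3$ is itself a de Bruijn successor on $\Sigma^n$ and therefore nonsingular, the only way nonsingularity could fail on $\mathbf{S}$ is if two distinct predecessors of $\beta$ both lie in $\mathbf{S}$ while the modification on Line 6 selects $\beta$ from both. I would rule this out by a short case analysis on whether the unmodified $g'_3$-image of each candidate predecessor already lies in $\mathbf{S}$; combined with the fact that the ``largest valid $x$'' tiebreaker is deterministic and strictly dominates other candidates, this forces a unique predecessor in $\mathbf{S}$.

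The genuinely new ingredient, and the main obstacle, is single-orbit connectivity. In the binary case one could invoke the spanning-tree property of PCR3, where the parent of a weight-$w$ cycle is a weight-$(w{-}1)$ cycle, so excluding any upper layer still leaves a connected subtree. For $k>2$, as Figure~\ref{fig:T43} illustrates, $g'_3$ joins same-weight cycles in cyclic chains rather than trees, so a naive ``delete a subtree'' argument does not apply. My plan is to reinterpret Line 6 as a local cycle-join surgery: whenever the default $g'_3$-branch from $\alpha$ would enter an excluded cycle $[C']$, selecting the largest valid $x$ is equivalent to replacing the conjugate-pair join $[C]\!\to\![C']$ with a conjugate-pair join to the next cycle in the same chain that lies in $\mathbf{S}$ (or, if none exists, keeping $[C]$ unjoined at this string). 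I would then argue by induction on the cycles stratified by weight and period that each such surgery preserves the property of being a single cycle on the currently included strings, using the fact that removing one cycle from a cyclic chain and re-routing its incoming conjugate-pair edge to the next available cycle in the same chain is a well-defined cycle-join operation. Since the starting configuration (no cycles excluded) is the full de Bruijn sequence produced by $g'_3$, and each inductive step preserves the single-cycle invariant, the final configuration yields a universal cycle on $\mathbf{S}$.
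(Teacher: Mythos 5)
Your proposal takes essentially the same route as the paper's proof: both rely on the $g'_3$ joining structure from the cited framework (the ``parent'' of a weight-$w$ cycle is a cycle of weight $w-1$, with same-chain cycles joined cyclically through a common length-$(n{-}1)$ string) and read Line~6 as rerouting a downward conjugate-pair join that would enter an excluded cycle to the highest-weight cycle of the same chain still in $\mathbf{S}$; your extra observations (well-definedness of Line~6 via the candidate $x=a_1$, and nonsingularity of the induced FSR) are sound but do not constitute a different argument. The one step worth spelling out --- which both you and the paper leave implicit --- is why ``largest valid $x$'' really is the ``next cycle of the chain lying in $\mathbf{S}$'': the PCR cycle through $a_2\cdots a_n x$ has weight equal to the weight of $a_2\cdots a_n$ plus $x$, so the cycles of a chain are strictly ordered by weight as $x$ increases, and since at most one of them can have weight exactly $m$ (where the period and $\mathbf{T}$ refinements apply), the cycles excluded from $\mathbf{S}$ always form a contiguous top segment of the chain, so the restricted assignment is again a cyclic permutation and your surgery step goes through.
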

%

\begin{exam}
Consider Figure~\ref{fig:T43} where $m=4$ and $h=3$.  Consider $\alpha = 003$, which belongs to the cycle [003].  Observe that $F_k(\alpha) = 033$. 
This string belongs to [033] with weight 6 and hence is not on the MC.  If we stay on the cycle 003 by setting $x=0$, then we also \emph{cut out} cycles [032] and [031] which may contain strings on the MC.  However, if we have not already visited $t$ cycles with weight $m=4$, then we still
want to join the cycle [031], as illustrated by the red dashed edge in Figure~\ref{fig:T43}; i.e., $x$ should be assigned 1, which is the largest symbol such that $03x \in \mathbf{S}$.
\end{exam}

Using the same notation as the binary case, let $MC_\mathbf{S}$ denote the universal cycle for $\mathbf{S}$ obtained from Algorithm~\ref{algo:MC2}.  Next, we demonstrate how small cycles $Z_i$ can be cut out of $MC_\mathbf{S}$ to obtain a universal cycle of the desired length $L$.

\subsection{Cutting out small cycles for $k>2$}

Like the binary case, Algorithm~\ref{algo:MC2} can be applied to construct a specific MC  by counting the number of cycles added to the MC of weight $m$ and period $h$.  By the choice of the successor {\sc PCR3}$_k$, we can cut out the same small cycles $Z_i$ using the same set $\mathbf{R}$ from the binary case.  The resulting Algorithm~\ref{algo:cut2} will construct a cut-down DB sequence of length $L$ for $k>2$.  The key differences from the binary algorithm are as follows:
\begin{itemize}
    \item The initial string is generalized to $0^{n-1}(k{-}1)$, as it is the substring following $0^n$ in the DB sequence generated by {\sc PCR3}$_k$.
    \item The two special cases no longer need to be considered since $m$ will always be greater than $\lceil n/2 \rceil$ with $k>2$.  Thus, the variable \emph{flag} is no longer required.
    \item Lines 9-15 apply simple operations to cut out appropriate cycles based on the definition of {\sc PCR3}$_k$ to obtain the desired MC. 
    \item Line 17 assigns $x$ to 0, which is the same as complementing $x$ when $x=1$ in the binary case.  The strings in $\mathbf{R}$ remain the same.
\end{itemize}


\begin{algorithm}[ht]      
\caption{Pseudocode for constructing a cut-down DB sequence (for $k>2$) of length $L$ assuming precomputed values $m,h,t$ and the set $\mathbf{R}$.}  
\label{algo:cut2}      

\begin{algorithmic} [1]                   
\Procedure{$k$-Cut-down}{}
\State $\alpha = a_1a_2\cdots a_n \gets 0^{n-1}(k{-}1)$
\State $t' \gets 0$

\For{$i\gets 1$ {\bf to} $L$}
    \State \Call{Print}{$a_1$}
    
    \Statex  
    \State \blue{$\triangleright$ UC-successor for the Main Cycle}
    \State $w \gets$ weight of $\alpha$  
    \State $x \gets$ {\sc PCR3}$_k(\alpha)$
     \If{$w-a_1+x \geq m$ }  
        \State $x \gets m-w+a_1$ 
        \State $\beta \gets a_2\cdots a_n x$
        \If{ $\per(\beta) > h$}  \ \  $x \gets x-1$   \ \ \blue{$\triangleright$ Cut out cycles of weight $m$ and period $ > h$}  \EndIf
        \If{ $\per(\beta) = h$} 
            
             \If{$t'=t$} \ \  $x \gets x-1$  \ \  \blue{$\triangleright$ Cut out excess cycles of weight $m$ and period $h$} 
             \Else  \ \ $t' \gets t'+1$
             \EndIf

        \EndIf
     \EndIf
    
    \Statex
    \State $\beta \gets a_2\cdots a_n x$
    \If{ $\beta \in \mathbf{R}$ } \ \ $x \gets 0$  \ \  \ \     \blue{$\triangleright$ Cut out small cycle(s)}   \EndIf
    \State $\alpha \gets a_2\cdots a_n x$ 
 
\EndFor

\EndProcedure
\end{algorithmic}
\end{algorithm}

\begin{theorem} \label{thm:cut2}
Algorithm~\ref{algo:cut2} generates a cut-down DB sequence of 
 length $k^{n-1} < L \leq k^n$ where $k>2$ requiring $\mathcal{O}(n)$ time per symbol and using $\mathcal{O}(n)$ space.
\end{theorem}

\begin{proof} 
The modifications from Algorithm~\ref{algo:MC2} to include only PCR cycles in $MC_{\mathbf{S}}$ are straightforward (Lines 9-15).  Thus, we focus on Line 17, which cuts out one or two cycles of the form $Z_i$, depending on $\mathbf{R}$.  Since the strings in these cycles are disjoint, we focus 
on a single case of $z_i(1) \in \mathbf{R}$.
Since $k>2$, each string in $\mathbf{Z}_i$ is clearly a substring of $MC_\mathbf{S}$; they have small weight. By applying the same arguments from
the proof of Lemma~\ref{lem:F}, we obtain similar results: $F_k(z_i(j)) = z_i(j+1)$ for $1 < i \leq \lceil \frac{n}{2} \rceil$.  Let $\alpha_0$ be the string such that $F_k(\alpha_0) = z_i(1)$; there is a minor difference from the binary case when $i$ divides $n$:

\medskip

$\alpha_0 = \left\{ \begin{array}{ll}
          10^{n-1}		&\ \  \mbox{if $i=1$;}\\
           \red{2}~(0^{i-1}1)^{a-1}~0^{i-1}		&\ \  \mbox{if $i>1$ and $i$ divides $n$;}\\
          10^{b}1(0^{i-1}1)^{a-1}~0^{i-1}    &\ \   \mbox{if $i$ does not divide $n$,}
\end{array} \right.$

\smallskip

\noindent
where $a=\lfloor \frac{n}{i} \rfloor$ and $b = (n \bmod i)-1$.
When $i=1$, {\sc PCR3}$_k(\alpha_0) = 0$ and for the latter two cases {\sc PCR3}$_k(\alpha_0) = 1$. Thus, indeed $F_k(\alpha_0) = z_i(1)$.  
In each case $(\alpha_0, z_i(i))$ is a conjugate pair and by changing the value $x$ to {\sc PCR3}$_k(z_i(i))$ we cut out the cycle $Z_i$ (effectively reversing an application of Lemma~\ref{thm:concat}).  When $i=1$, {\sc PCR3}$_k(z_1(1)) = k{-}1$; otherwise {\sc PCR3}$_k(z_i(i)) = 0$. 
Observe, however, that we do not need to consider the case when $i=1$, since the algorithm is initialized to $0^{n-1}(k{-}1)$ and
the string $\alpha = 10^{n-1}$ is visited in the final iteration of the {\bf for} loop. This is why we can simply assign $x \gets 0$ at Line 17.
By limiting the number of iterations of the for loop to $L$ to account for cutting out the one or two small cycles, the resulting Algorithm~\ref{algo:cut2} generates a cut-down DB sequence of length $L$.


\end{proof}

\subsection{Precomputing $m,h,t,s$ for $k>2$}

\edit{The time to precompute the values $m,h,t,s$ for $k >2$ depend on the time to enumerate $T_k(n',w)$ for all $0 \leq n' \leq n$ and $0 \leq w \leq (k-1)n$. These values can be computed in $\mathcal{O}(k^2n^2)$ time applying dynamic programming techniques to the following recurrence for $n \geq 0$:}

\smallskip
\edit{
$T_k(n',w) = \left\{ \begin{array}{ll}
           0		&\ \  \mbox{if $w<0$ or ($n'=0$ and $w>0$);}\\
           1		&\ \  \mbox{if $n'=0$ and $w=0$;}\\
           \sum_{j=0}^{k-1} ~T_k(n'-1,w-j)     &\ \   \mbox{otherwise.}\end{array} \right.$}

\begin{exam}
We illustrate the computations of $m,h,t,s$ for $n=6$, $k=3$ and $L=617$.  First, we  compute the following table of values for $T(n',w)$:
\begin{center}
\begin{tabular}{c | rrrrrrrrr}
     $n'~\backslash~w$ & 0 & 1 & 2 & 3 & 4 & 5 & 6 & 7 & 8 \\ \hline
1  &     1 &     1 &     1 &     0 &     0 &     0 &     0 &     0 &     0  \\ 
2  &     1 &     2 &     3 &     2 &     1 &     0 &     0 &     0 &     0  \\
3  &     1 &     3 &     6 &     7 &     6 &     3 &     1 &     0 &     0  \\
4  &     1 &     4 &    10 &    16 &    19 &    16 &    10 &     4 &     1  \\
5  &     1 &     5 &    15 &    30 &    45 &    51 &    45 &    30 &    15 \\ 
6  &     1 &     6 &    21 &    50 &    90 &   126 &   141 &   126 &    90  \\    
\end{tabular}
\end{center}
Recall the definitions of $A(w)$, $B(w,p)$, and $C(w,p)$ defined in Section~\ref{sec:etzion},
for $n=6$ and $k=3$.
Since $A(7) = 561$ and $A(8) = 651$, we have $m=8$.
Since $B(8,1) = B(8,2) = 0$, $B(8,3) = 6$, $B(8,4) = B(8,5) = 0$,  and $B(8,6) = 84$, we have
$C(8,5) = 6$ and $C(8,6) = 90$. 
Thus $h=6$.  Since $A(8) + C(8,5) + 9h = 621$, we have $t=9$ and surplus $s=4$.

\end{exam}

\section{Summary and future work}

In this paper, we have enhanced Etzion's algorithm~\cite{etzion} to construct binary cut-down DB sequences.  Moreover, we generalize the algorithm to alphabets
of arbitrary size by selecting an appropriate underlying feedback function. The resulting algorithms run in $\mathcal{O}(n)$-time per symbol using $\mathcal{O}(n)$ space after some initialization requiring polynomial time and space; they are available for download at~\url{http://debruijnsequence.org/db/cutdown}~\cite{dborg}.
By utilizing an efficient algorithm to rank fixed-weight Lyndon words, we developed the first successor-rule construction for binary cut-down DB sequences that only requires the current length-$n$ substring to determine the next bit. It requires  $\mathcal{O}(n^{1.5})$-amortized
simple operations on $n$-bit numbers per bit.  However, it is important to note that the efficient ranking algorithm only applies to the binary case.

It is not difficult to observe that the cut-down DB sequences produced by our algorithms are not \emph{balanced}.
  Thus,  avenues for future research include:
\begin{enumerate}

    \item Develop an efficient (cycle-joining) construction for generalized DB sequences.
    \item Develop an efficient (cycle-joining) construction for balanced cut-down DB sequences.
        \item Develop an efficient ranking algorithm for fixed-weight Lyndon words and necklaces for $k>2$.
\end{enumerate}

\bibliographystyle{acm.bst}

\bibliography{cutdown-db}

\end{document}